\newtheorem{theorem}{{\bf Theorem}}
\newtheorem{lemma}{\noindent {\bf Lemma}}
\let\l@ENGLISH\l@english
\DeclareMathOperator*{\argmin}{argmin}
\begin{document}%
\pagenumbering{gobble}
\title{Fluctuations of  the SNR at the output of the MVDR with Regularized Tyler Estimators}
\author{Khalil Elkhalil,~\IEEEmembership{Student Member,~IEEE,} Abla Kammoun,~\IEEEmembership{Member,~IEEE}, Tareq~Y.~Al-Naffouri,~\IEEEmembership{Member,~IEEE}, and Mohamed-Slim Alouini,~\IEEEmembership{Fellow,~IEEE}

\thanks{This work was funded by a CRG3 grant ORS\#2221 from the Office of Competitive
Research (OCRF) at King Abdullah University of Science and Technology
(KAUST). \par
The authors are with the Electrical Engineering Program, King Abdullah University of Science and Technology, Thuwal, Saudi Arabia; e-mails: \{khalil.elkhalil, abla.kammoun, tareq.alnaffouri, slim.alouini\}@kaust.edu.sa.
}
}
\maketitle
%
\begin{abstract}
    This paper analyzes the statistical properties of the signal-to-noise ratio (SNR) at the output of the Capon's minimum variance distortionless response (MVDR) beamformers when operating over impulsive noises. Particularly, we consider the supervised case in which the receiver employs the regularized Tyler estimator in order to estimate the  covariance matrix of the interference-plus-noise process using $n$ observations of size $N\times 1$. 
    The choice for the regularized Tylor estimator (RTE) is motivated by its resilience to the presence of outliers and its regularization
    parameter that guarantees a good conditioning of the covariance estimate. Of particular interest in this paper is the derivation of the second order statistics of the SINR. To achieve this goal, we consider two different approaches. The first one is based on considering the classical regime, referred to as the $n$-large regime, in which $N$ is assumed to be fixed while $n$ grows to infinity. The second approach is built upon recent results developped within the framework of
    random matrix theory and assumes that $N$ and $n$ grow large together. 
    Numerical results are provided in order to compare between the accuracies of each regime under different settings.    

\end{abstract}
\begin{IEEEkeywords}  	
MVDR beamforming, robust estimators, regularized Tyler estimator, central limit theorem. \end{IEEEkeywords}

%

\section{Introduction}
\PARstart{T}{he} minimum variance distortionless response (MVDR) beamformer or the Capon's MVDR beamformer is widely used in sensor array signal processing applications such as the inspection of direction of arrival (DOA) and the estimation of the power of a given signal of interest (SOI) \cite{Trees,Stoica}. 
The design of the MVDR beamforming requires the receiver to acquire an estimate of the unknown interference and noise covariance matrix.  Several covariance estimators constructed from  signal-free observations can be employed. The most popular ones are those based on the sample covariance matrix (SCM). Their popularity owe to  their low-complexity and the existence of a good understanding of their behaviour. However, SCM based estimators are well-known
to exhibit poor performances when observations contain outliers. This drawback becomes more acute in many applications such as radar and sonar processing where the noise is known to present an impulsive behaviour \cite{Ward,watts,haykin}. 
A promising alternative to the use of these estimators is represented by the class of robust scatter estimators. The latter can be traced back to  the early works of Huber \cite{huber-81} and Maronna \cite{maronna-76} in the seventies.  With  the emergence of new tools allowing the understanding of many robust covariance estimators, there is today a rekindled interest in the analysis of these estimators. The focus of our work is on the regularized tyler
estimator (RTE) for which a new wave of important results have been obtained  \cite{abla,couillet-Mckay,abla_romain}. 
Of particular interest in this work is the behavior of the SNR at the output of the MVDR filter in impulsive noise environments. This question has not been addressed before. To the best of our knowledge, the existing works have thus far focused on the behavior of the SNR of the MVDR filter in Gaussian noise environments. 
In this course, the early results have provided  an asymptotic characterization of the SNR  in the limiting regime defined by both the number of samples and their dimensions growing large with the same pace \cite{mestre}. Subsequently,  a second order analysis gaining insights into the fluctuations of the SNR has been carried out in \cite{rubio}. The objective of this paper is to extend the aforementioned results concerning the behaviour of the SNR at the output of the MVDR when
the noise sample covariance matrix is estimated using the RTE. Under this setting, we establish in this paper a central limit theorem (CLT) of the SNR under two different regimes.   The first regime corresponds to the classical one obtained by fixing the dimension of the observations and tending their number to infinity. The second regime, on the other hand, merely consists in assuming that the
number of samples and their dimensions grow large at the same pace. 
Both regimes can be of practical interest. Intuitively, the former is suitable for scenarios in which the number of observations is much greater than the array size, while the second is expected to be more accurate when the number of samples and their dimensions are of the same order of magnitude. Such intuition will be confirmed by a set of numerical results, comparing the performance of both regimes in terms of
some meaningful metrics. 

The remainder of this paper is organized as follows. Section II reviews  the MVDR beamforming and the RTE of the covariance matrix. In section III, the main result deriving the CLT of the SNR at the MVDR beamformer is provided. Prior to concluding, the last section presents a set of numerical results allowing to compare between both regimes. 

\emph{Notations:} 
Throughout this paper, we use the following notations :  $\left(.\right)^t$, $\left(.\right)^*$ and $\textnormal{tr}\left(.\right)$ respectively stand for the transpose, the hermitian and the trace of a matrix. Also, $\otimes$ denotes the Kronecker product between two matrices while $\Re(.)$ and $\Im(.)$ respectively denote the real and the imaginary parts of a matrix.

\section{Robust MVDR Filtering with RTE}
\subsection{Robust MVDR}
We consider a uniform linear array (ULA) with $N$ sensors, receiving a narrow band source signal. The received vector at time $t$ can be represented by
$$
{\bf y}(t)={\bf s}_0 s(t) +{\bf x}(t),
$$
where ${\bf s}_0$ and $s(t)$ refer respectively to the array steering vector and the source signal at time $t$, whereas ${\bf x}(t)$ stands for the additive noise vector at time $t$. We assume that the distribution of the noise is heavy-tailed  belonging to the family of  compound-Gaussian distributions, i.e, ${\bf x}(t)$ can be put in the following form:
\begin{equation}
    {\bf x}(t)=\sqrt{\tau(t)}\boldsymbol{\Sigma}_N^{\frac{1}{2}}{\bf w}(t),
\label{eq:noise}
\end{equation}
where ${\bf w}(t)$ is a $N\times 1$ standard Gaussian vector, $\tau(t)$ is a positive random scalar called texture. Usually, $\tau(t)$ is drawn from heavy-tailed distribution in order to account for the impulsive character of the noise. Matrix $\boldsymbol{\Sigma}_N$ is the noise covariance matrix and is assumed to take the following form \cite{DOA}
\begin{equation}
\mathbf{\Sigma}_N = \sigma_0^2 \mathbf{I}_N + \sum_{i=1}^q\sigma_i^2 \mathbf{a}\left(\theta_i\right)\mathbf{a}\left(\theta_i\right)^*,
\label{eq:sigma}
\end{equation}
where $q$ is the number of interferers,  $\left\{\theta_i\right\}$, $i \in \left \{1, \cdots,q  \right \}$ are their corresponding angles of arrival, and ${\bf a}(\theta)$ is the $N\times 1$ array steering vector given by:
$$
\left[{\bf a}(\theta)\right]_k=\exp(\jmath 2\pi (k-1)\theta)
$$
The received vector is processed by a beamformer in order to enhance the desired signal while reducing the impact of the noise: 
$$
{ z}= {\bf u}^{*}{\bf y}(t).
$$
We consider in this paper the MVDR beamformer which seeks the best filter ${\bf u}$ that minimizes the power of the resulting noise while ensuring the distortionless response of the beamformer towards the direction of the desired source. The corresponding optimization problem is thus given by \cite{DOA}: 
\begin{equation} \label{MVDRoptim}
\mathbf{u}_0 = \argmin_{\mathbf{u}\in \mathbb{C}^{N\times 1}: \mathbf{u}^* \mathbf{s}_0=1} \mathbf{u}^* \mathbf{\Sigma}_N\mathbf{u},
\end{equation}
 Using the Lagrange method, it can be shown that $\mathbf{u}_0$ has the following closed-form expression:
\begin{equation}
\mathbf{u}_0 = \lambda \mathbf{\Sigma}^{-1}_N \mathbf{s}_0,
\label{eq:optimal_z0}
\end{equation}
where $\lambda$ is the Lagrange multiplier satisfying $\lambda = \frac{1}{\mathbf{s}_0^* \mathbf{\Sigma}^{-1}_N \mathbf{s}_0}$. 

As shown by \eqref{eq:optimal_z0}, the design of the MVDR beamforming requires the knowledge of the noise covariance matrix. In practice, this unknown covariance matrix is replaced by an estimate that is built from signal-free observations. In order to ensure the robustness of the beamformer towards the impulsive character of the noise, robust covariance estimators should be used.  In this paper we focus on the use of the regularized Tyler estimator (RTE).

\subsection{MVDR Beamforming Based on the RTE}
We assume that the receiver has previously acquired $n$ free source signal observations ${\bf x}_1,\cdots, {\bf x}_n$ drawn from the same distribution of ${\bf x}(t)$ in \eqref{eq:noise}, i.e,
\begin{equation}
\mathbf{x}_i = \sqrt{\tau_i}\mathbf{\Sigma}_N^{\frac{1}{2}} \mathbf{w}_i, \: i=1,\cdots,n.
\end{equation}
 The regularized robust scatter estimator is defined as the unique solution $\widehat{\mathbf{C}}_N (\rho)$ to the following fixed-point equation:
\begin{equation} \label{RTE}
\widehat{\mathbf{C}}_N (\rho) = \left(1-\rho\right)\frac{1}{n}\sum_{i=1}^n \frac{\mathbf{x}_i\mathbf{x}_i^*}{\frac{1}{N}\mathbf{x}_i^*\widehat{\mathbf{C}}_N^{-1} (\rho)\mathbf{x}_i}+\rho \mathbf{I}_N,
\end{equation} 
where $\rho \in \left(\max\left(0,1-\frac{n}{N}\right),1\right]$ is the regularization parameter\footnote{The existence and uniqueness of $\widehat{\bf C}_N(\rho)$ is proved in \cite{Pascal-2013}.}. Note that the robustness of the RTE can be easily seen from \eqref{RTE} which reveals its invariance towards the scaling of ${\bf x}_i$ thus allowing the cancelling-out of the impact of $\tau_i$. 
Using the RTE for covariance matrix estimation, the optimal MVDR beamforming vector becomes
\begin{equation}
\hat{\mathbf{u}}_0 = \frac{1}{\mathbf{s}_0^* \widehat{\mathbf{C}}_N^{-1} (\rho) \mathbf{s}_0}\widehat{\mathbf{C}}_N^{-1} (\rho) \mathbf{s}_0
\end{equation}
Therefore, the SNR at the output of the MVDR beamforming is given by:
\begin{equation}
\label{snr_mvdr}
\widehat{\textnormal{SNR}}\left(\rho\right)=\frac{\left(\mathbf{s^*_0}\widehat{\mathbf{C}}_N^{-1} (\rho)\mathbf{s_0}\right)^2}{\mathbf{s^*_0}\widehat{\mathbf{C}}_N^{-1} (\rho)\mathbf{\Sigma}_N\widehat{\mathbf{C}}_N^{-1} (\rho)\mathbf{s_0}}.
\end{equation}
\section{Asymptotic Behaviour of the MVDR Beamforming SNR}
In this paper, our aim is to study the first and second-order statistics of the SINR in \eqref{snr_mvdr}. For the sake of tractability, this study is carried out under two asymptotic regimes. The first one corresponds to $N$ and $n$ growing to infinity such that $c_N\triangleq \frac{N}{n}\to c$ and is referred to as the large-$(N,n)$ regime, whereas the second one considers the case of fixed $N$  with $n$ growing to infinity and will be coined the Large$-n$ regime. 
\subsection{Asymptotic Behavior in the Large-$(N,n)$ Regime}
In this section, we study the fluctuations of the SINR in the large$(N,n)$-regime. To this end, we will essentially rely on the second order analysis of the SNR at the output of the MVDR established in \cite{rubio} and  the recent results concerning the behaviour of quadratic forms associated with the RTE \cite{abla_romain}. Details of the derivation are provided in Appendix 1. 
Before stating our first main result, we will introduce some notations (see \cite{abla_romain} and \cite{rubio}).
We define $\gamma_N\left(\rho\right)$ to be  the solution to the following equation:
\begin{align*}
    1 = \frac{1}{N}\text{tr}\left[\boldsymbol{\Sigma}_N\left(\rho\gamma_N\left(\rho\right)\mathbf{I}_N+\left(1-\rho\right)\boldsymbol{\Sigma}_N \right)^{-1}\right].
\end{align*}
We also denote by $\delta$ the solution to the following fixed-point equation
\begin{align*}
    \delta = \frac{1}{n} \text{tr}\left[\boldsymbol{\Sigma}_N\left(\frac{1}{1+\delta}\boldsymbol{\Sigma}_N+\alpha\left(\rho\right)\mathbf{I}_N\right)^{-1}\right],
\end{align*}
where 
\begin{align*}
\alpha\left(\rho\right)=\frac{\rho \gamma_N\left(\rho\right)\left(1-(1-\rho)c_N\right)}{1-\rho}.
\end{align*}
\begin{align*}
\overline{\textnormal{SNR}\left(\rho\right)}= \overline{\text{SNR}\left(\hat{\mathbf{w}}_{0,\text{MVDR}}\right)}, \quad \sigma_{N,n}^2 = \sigma_{s,M}^2.
\end{align*}
$\overline{\text{SNR}\left(\hat{\mathbf{w}}_{0,\text{MVDR}}\right)}$ and $\sigma_{s,M}^2$ are defined in \cite[Theorem 1]{rubio} by replacing $\delta_M$ by $\delta$, $\tilde{\delta}_M$ by $\frac{1}{1+\delta}$ and $\alpha$ by $\alpha\left(\rho\right)$.

\begin{theorem}
    Assume that $\boldsymbol{\Sigma}_N$ is given by \eqref{eq:sigma} where $q$ is fixed. In the Large $\left(N,n\right)$- regime where $(N,n)\rightarrow \infty$ with $\frac{N}{n} \rightarrow c$, the quantity $\sigma_{N,n}^{-1}\sqrt{n}\left(\widehat{\textnormal{SNR}}\left(\rho\right)-\overline{\textnormal{SNR}\left(\rho\right)}\right)$ behaves as a standard normal distribution or equivalently
\begin{equation}
\sigma_{N,n}^{-1}\sqrt{n}\left(\widehat{\textnormal{SNR}}\left(\rho\right)-\overline{\textnormal{SNR}\left(\rho\right)}\right)\xrightarrow[(N,n)\to+\infty]{d}\mathcal{N}\left(0,1\right).
\end{equation}
\end{theorem}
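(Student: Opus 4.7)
The plan is to reduce the CLT for $\widehat{\textnormal{SNR}}(\rho)$ to the CLT already established in \cite{rubio} for the MVDR SNR based on a regularized Gaussian sample covariance matrix (SCM). The bridge is supplied by \cite{abla_romain}, whose central message is that, in the large-$(N,n)$ regime, the inverse RTE behaves — in the sense of bilinear forms — like the inverse of a regularized Gaussian SCM with regularization $\alpha(\rho)$. This explains why the two fixed-point equations defining $\gamma_N(\rho)$ and $\delta$ appear in the statement: they are the Stieltjes-type quantities associated respectively with the RTE and with its Gaussian surrogate, and they are the right constants that make the two spectra match.

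The first step is to invoke the quadratic-form results of \cite{abla_romain} to show that, for bounded-norm deterministic vectors $\mathbf{a},\mathbf{b}$,
\begin{equation*}
\mathbf{a}^{*}\widehat{\mathbf{C}}_{N}^{-1}(\rho)\mathbf{b} \;=\; \frac{1}{1-(1-\rho)c_{N}}\,\mathbf{a}^{*}\bigl((1-\rho)\hat{\boldsymbol{\Sigma}}_{N}+\alpha(\rho)\mathbf{I}_{N}\bigr)^{-1}\mathbf{b} \;+\; \varepsilon_{n},
\end{equation*}
where $\hat{\boldsymbol{\Sigma}}_{N}=\frac{1}{n}\sum_{i=1}^{n}\boldsymbol{\Sigma}_{N}^{1/2}\mathbf{w}_{i}\mathbf{w}_{i}^{*}\boldsymbol{\Sigma}_{N}^{1/2}$ is a Gaussian SCM and $\varepsilon_{n}=o_{P}(n^{-1/2})$. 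The textures $\tau_{i}$ vanish by the scale invariance of the RTE fixed-point equation \eqref{RTE}. Applying this identity to the numerator $\mathbf{s}_{0}^{*}\widehat{\mathbf{C}}_{N}^{-1}(\rho)\mathbf{s}_{0}$ and, after a resolvent trick, to the denominator $\mathbf{s}_{0}^{*}\widehat{\mathbf{C}}_{N}^{-1}(\rho)\boldsymbol{\Sigma}_{N}\widehat{\mathbf{C}}_{N}^{-1}(\rho)\mathbf{s}_{0}$ of \eqref{snr_mvdr} reduces the analysis to the fluctuation behaviour of an MVDR SNR built on a regularized Gaussian SCM with regularization $\alpha(\rho)$.

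The second step is a direct appeal to \cite[Theorem~1]{rubio} applied to this Gaussian surrogate. The theorem in \cite{rubio} produces a CLT with centering $\overline{\textnormal{SNR}(\rho)}$ and variance $n^{-1}\sigma_{N,n}^{2}$ once one substitutes $\delta_{M}\mapsto\delta$, $\tilde{\delta}_{M}\mapsto 1/(1+\delta)$ and $\alpha\mapsto\alpha(\rho)$, which is exactly the substitution indicated in the paper. A delta-method expansion of the ratio in \eqref{snr_mvdr} about the deterministic equivalents of its numerator and denominator transfers the joint Gaussian fluctuations of the two bilinear forms into the Gaussian fluctuation of the SNR, and Slutsky's lemma combined with the $o_{P}(n^{-1/2})$ bound from Step~1 completes the reduction.

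The main obstacle lies in Step~1: one needs the bilinear-form approximation at the precision $o_{P}(n^{-1/2})$, whereas a typical statement in the RTE literature provides only almost-sure convergence of bilinear forms without a controlled rate. A refinement of the quadratic-form estimates of \cite{abla_romain} is therefore required, keeping track of every correction of order $n^{-1/2}$ and showing that, after centering around $\overline{\textnormal{SNR}(\rho)}$, these corrections are absorbed into the already-accounted-for fluctuations of the Gaussian surrogate. Once that refinement is secured, the rest of the argument — Taylor expansion of the SNR ratio, identification of the limiting variance, and application of the CLT in \cite{rubio} — is routine and yields the claimed standard Gaussian limit.
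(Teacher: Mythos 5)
Your proposal follows essentially the same route as the paper: approximate the RTE by a diagonally loaded Gaussian sample covariance matrix using the bilinear-form results of \cite{abla_romain}, show that the induced difference in the SNR is $o_P(n^{-1/2})$, and conclude via Slutsky's lemma together with \cite[Theorem 1]{rubio} applied to the Gaussian surrogate. The refinement you flag as the main obstacle is in fact already available: the quadratic-form convergence in \cite{abla_romain} holds after normalization by $N^{1-\epsilon}$, which, since $N/n\to c$ and one may take $\epsilon<1/2$, is precisely the $o(n^{-1/2})$ control needed to transfer the fluctuations.
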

\begin{proof}
See Appendix A for a detailed proof.
\end{proof}
\subsection{Asymptotic Behavior in the large-$n$ Regime}
In this section, we study the fluctuations of the SNR at the output of the MVDR \eqref{snr_mvdr} in the large-$n$ regime.  
Our result will mainly build on the CLT of the RTE that has recently been derived in \cite{abla}.
 Keeping the same notations as in \cite{abla}, the following theorem from \cite{abla} establishes the CLT of the robust-scatter estimator:
\begin{lemma} \cite{abla}
In the large-$n$ regime, 
\begin{align*}
\sqrt{n}\left(\textnormal{vec}\left(\widehat{\mathbf{C}}_N (\rho)\right)-\textnormal{vec}\left(\mathbf{\Sigma}_0(\rho)\right)\right)
\end{align*} behaves as a zero-mean Gaussian distributed vector with covariance matrix $\mathbf{M}_1$ and pseudo-covariance matrix $\mathbf{M}_2$ defined in \cite{abla}, where $\mathbf{\Sigma}_0(\rho)$ is the solution to the following equation
\begin{equation}
\label{sigma0}
\mathbf{\Sigma}_0(\rho) = N\left(1-\rho\right)\mathbb{E}\left[\frac{\mathbf{x}\mathbf{x}^*}{\mathbf{x}^*\boldsymbol{\Sigma}_0^{-1}(\rho) \mathbf{x}}\right]+\rho \mathbf{I}_N,
\end{equation}
where the expectation is taken over the distribution of the random vectors $\mathbf{x}_i$. \footnote{
    A simple way to evaluate numerically $\boldsymbol{\Sigma}_0$ has been provided in \cite{abla}. It is merely based on noticing that the eigenvectors of $\boldsymbol{\Sigma}_0$ are the same as $\boldsymbol{\Sigma}_N$ while its eigenvalues  satisfy a fixed point equation as shown in \cite{abla}, 
}
\end{lemma}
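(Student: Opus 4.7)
The plan is to view $\widehat{\mathbf{C}}_N(\rho)$ as the fixed point of the empirical map
\[
\mathcal{T}_n(\mathbf{C}) = (1-\rho)\frac{1}{n}\sum_{i=1}^n \frac{\mathbf{x}_i\mathbf{x}_i^*}{\frac{1}{N}\mathbf{x}_i^*\mathbf{C}^{-1}\mathbf{x}_i} + \rho\mathbf{I}_N,
\]
while $\mathbf{\Sigma}_0(\rho)$ is the fixed point of its population analogue $\mathcal{T}$, obtained by replacing the empirical mean by an expectation. The problem then reduces to a CLT for an implicitly defined M-estimator, which I would handle by the classical three-step strategy: consistency, linearization, and a delta-method argument.

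For consistency, the key observation is that $\rho>0$ renders $\mathcal{T}$ a strict contraction on the cone of positive-definite matrices (most conveniently phrased in the Thompson part metric); combined with a uniform LLN forcing $\mathcal{T}_n\to\mathcal{T}$ locally around $\mathbf{\Sigma}_0(\rho)$, this yields $\widehat{\mathbf{C}}_N(\rho)\to\mathbf{\Sigma}_0(\rho)$ a.s. For the linearization, set $\mathbf{\Delta}_n=\widehat{\mathbf{C}}_N(\rho)-\mathbf{\Sigma}_0(\rho)$; subtracting the two fixed-point identities yields
\[
\mathbf{\Delta}_n = \bigl[\mathcal{T}_n(\mathbf{\Sigma}_0)-\mathcal{T}(\mathbf{\Sigma}_0)\bigr] + \bigl[\mathcal{T}_n(\widehat{\mathbf{C}}_N(\rho))-\mathcal{T}_n(\mathbf{\Sigma}_0)\bigr],
\]
and a Fr\'echet expansion of the second bracket, coupled with the convergence of the derivative of $\mathcal{T}_n$ at $\mathbf{\Sigma}_0(\rho)$ to that of $\mathcal{T}$, produces
\[
\text{vec}(\mathbf{\Delta}_n) = (\mathbf{I}-\mathbf{J})^{-1}\text{vec}\bigl(\mathcal{T}_n(\mathbf{\Sigma}_0)-\mathcal{T}(\mathbf{\Sigma}_0)\bigr) + o_P(n^{-1/2}),
\]
where $\mathbf{J}$ is the vec-form of the Jacobian of $\mathcal{T}$ at $\mathbf{\Sigma}_0(\rho)$. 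A bootstrap from the consistency estimate is needed to confirm $\|\mathbf{\Delta}_n\|=O_P(n^{-1/2})$ so the Taylor remainder is genuinely negligible.

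The third step is the multivariate complex CLT applied to the centred i.i.d.\ sum $\sqrt{n}\,\text{vec}(\mathcal{T}_n(\mathbf{\Sigma}_0)-\mathcal{T}(\mathbf{\Sigma}_0))$, with Gaussianity transported through the continuous linear map $(\mathbf{I}-\mathbf{J})^{-1}$. The covariance $\mathbf{M}_1$ and pseudo-covariance $\mathbf{M}_2$ are then read off from the second-order (Hermitian and symmetric) moments of the summand; crucially, the scale-invariance of $\mathbf{x}_i\mathbf{x}_i^*/(\mathbf{x}_i^*\mathbf{C}^{-1}\mathbf{x}_i)$ cancels the texture $\tau_i$, so these moments may be computed as if $\mathbf{x}_i\sim\mathcal{CN}(\mathbf{0},\mathbf{\Sigma}_N)$.

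The main obstacle I anticipate is the explicit computation of $(\mathbf{I}-\mathbf{J})^{-1}$ and the resulting closed form for $\mathbf{M}_1$ and $\mathbf{M}_2$: this demands careful Kronecker-product bookkeeping and exploits the fact that $\mathbf{\Sigma}_0(\rho)$ shares its eigenbasis with $\mathbf{\Sigma}_N$, itself a consequence of the symmetry of equation \eqref{sigma0}. A secondary concern is establishing the invertibility of $\mathbf{I}-\mathbf{J}$ on the space of Hermitian matrices, which should follow from the strict contractivity already invoked for consistency.
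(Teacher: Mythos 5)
The first thing to note is that the paper does not prove this statement at all: Lemma~1 is imported verbatim from \cite{abla}, so there is no internal argument to compare yours against. That said, your outline is the standard architecture for a CLT of an implicitly defined M-estimator --- consistency via the order-preserving, strictly subhomogeneous character of the fixed-point map (hence contraction in the Thompson part metric, which is also how uniqueness is obtained in the reference the paper cites for existence), linearization of $\mathcal{T}_n(\widehat{\mathbf{C}}_N)-\mathcal{T}_n(\mathbf{\Sigma}_0)$ around $\mathbf{\Sigma}_0(\rho)$, and a multivariate complex CLT pushed through $(\mathbf{I}-\mathbf{J})^{-1}$ --- and this is essentially how \cite{abla} proceeds. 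Your observation that the scale invariance of $\mathbf{x}\mathbf{x}^*/(\mathbf{x}^*\mathbf{C}^{-1}\mathbf{x})$ cancels the texture $\tau_i$ is the right reason both for the uniform LLN (the summands are bounded on a compact neighborhood of $\mathbf{\Sigma}_0(\rho)$) and for computing the moments under the Gaussian law. The caveat is that what you present is a programme rather than a proof: the two items you defer as ``anticipated obstacles'' --- invertibility of $\mathbf{I}-\mathbf{J}$ on Hermitian matrices (strict contractivity in the Thompson metric does not by itself bound the Euclidean spectral radius of the Fr\'echet derivative below one; this requires a separate argument exploiting $\rho>0$) and the closed forms of $\mathbf{M}_1$ and $\mathbf{M}_2$ --- are exactly where the substantive work of \cite{abla} lies, and the $O_P(n^{-1/2})$ bootstrap for the Taylor remainder needs to be carried out, not merely announced. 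As a blind reconstruction of the cited result's proof strategy, however, it is correct and has no wrong turns.
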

The following theorem can be used in order to derive CLT for any functional of the RTE under the large$-n$ regime. In particular, we will show in this work how this CLT can be transferred to that of the SNR at the output of the MVDR beamforming. 
Note that under large$-n,N$ regime  a similar result  cannot be derived in general as the dimensions of $\widehat{\bf C}_N(\rho)$ increase with the number of samples. Before stating our second main theorem, we shall introduce the following quantities:  
\small
\begin{align*}
&\mathbf{B}=\mathbf{\Sigma}_0^{-1}(\rho) \mathbf{\Sigma}_N\mathbf{\Sigma}_0^{-1}(\rho). \\
&\textnormal{SNR}_0\left(\rho\right)=\frac{\left(\mathbf{s^*_0}\mathbf{\Sigma}_0^{-1}(\rho)\mathbf{s_0}\right)^2}{\mathbf{s^*_0}\mathbf{B}\mathbf{s_0}}. \\ 
&\Xi= \frac{1}{2}\begin{bmatrix}
\Re(\mathbf{M}_1)+\Re(\mathbf{M}_2) & -\Im(\mathbf{M}_1)+\Im(\mathbf{M}_2)\\ 
 \Im(\mathbf{M}_1)+\Im(\mathbf{M}_2)& \Re(\mathbf{M}_1)-\Re(\mathbf{M}_2) 
\end{bmatrix} ,\\
&\mathbf{c}^* = \frac{\left(\mathbf{s}_0^*\mathbf{\Sigma}_0^{-1}(\rho)\mathbf{s_0}\right)^2}{\left(\mathbf{s}_0^*\mathbf{B}\mathbf{s_0}\right)^2} \biggl[ \mathbf{s}_0^*\mathbf{B}\left[\left(\mathbf{\Sigma}_0^{-1}(\rho)\mathbf{s_0}\right)^t \otimes \mathbf{I}_N\right]\\
&+ \mathbf{s}_0^*\mathbf{\Sigma}_0^{-1}(\rho)\left[\left(\mathbf{B}\mathbf{s_0}\right)^t \otimes \mathbf{I}_N\right]\Biggr]\\
&-2 \frac{\mathbf{s}_0^*\mathbf{\Sigma}_0^{-1}(\rho)\mathbf{s_0}}{\mathbf{s}_0^*\mathbf{B}\mathbf{s_0}} \mathbf{s}_0^*\mathbf{\Sigma}_0^{-1}(\rho)\left[\left(\mathbf{\Sigma}_0^{-1}(\rho)\mathbf{s_0}\right)^t \otimes \mathbf{I}_N\right]. \\
&\widetilde{\mathbf{c}} = \begin{bmatrix}
\Re\left(c \right )\\ 
\Im\left(c \right )
\end{bmatrix}.
\end{align*}
\normalsize
\begin{theorem}
In the large$-n$ regime $\sqrt{n}\left(\widehat{\textnormal{SNR}}\left(\rho\right)-\textnormal{SNR}_0\left(\rho\right)\right)$ behaves as a Normal distribution with zero-mean and variance $\sigma_n^2 =\widetilde{\mathbf{c}}^t \Xi \widetilde{\mathbf{c}} $ or equivalently
\begin{equation} \label{clt}
\sigma_n^{-1}\sqrt{n}\left(\widehat{\textnormal{SNR}}\left(\rho\right)-\textnormal{SNR}_0\left(\rho\right)\right)\xrightarrow[n\to+\infty]{d}\mathcal{N}\left(0,1\right).
\end{equation}
\end{theorem}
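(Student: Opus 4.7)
The plan is to view $\widehat{\textnormal{SNR}}(\rho)$ as a smooth functional of the matrix-valued RTE $\widehat{\mathbf{C}}_N(\rho)$ and invoke the delta method, using Lemma 1 as the input CLT. Concretely, I would define $f(\mathbf{C}) = (\mathbf{s}_0^* \mathbf{C}^{-1} \mathbf{s}_0)^2 / (\mathbf{s}_0^* \mathbf{C}^{-1} \boldsymbol{\Sigma}_N \mathbf{C}^{-1} \mathbf{s}_0)$, so that $\widehat{\textnormal{SNR}}(\rho) = f(\widehat{\mathbf{C}}_N(\rho))$ and $\textnormal{SNR}_0(\rho) = f(\boldsymbol{\Sigma}_0(\rho))$. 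The map $f$ is smooth on the open cone of Hermitian positive definite matrices, and $\boldsymbol{\Sigma}_0(\rho)$ lies in this cone with its smallest eigenvalue bounded below by $\rho > 0$, which ensures that the first-order expansion is well defined.

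The second step is a first-order Taylor expansion around $\boldsymbol{\Sigma}_0(\rho)$:
\begin{align*}
\widehat{\textnormal{SNR}}(\rho) - \textnormal{SNR}_0(\rho) = df_{\boldsymbol{\Sigma}_0(\rho)}\bigl(\widehat{\mathbf{C}}_N(\rho) - \boldsymbol{\Sigma}_0(\rho)\bigr) + R_n,
\end{align*}
with $R_n = O\bigl(\|\widehat{\mathbf{C}}_N(\rho) - \boldsymbol{\Sigma}_0(\rho)\|^2\bigr)$. Since Lemma 1 gives $\|\widehat{\mathbf{C}}_N(\rho) - \boldsymbol{\Sigma}_0(\rho)\| = O_P(1/\sqrt{n})$, I have $\sqrt{n}R_n = o_P(1)$, and the CLT reduces to one for the linearized term. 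I would then compute $df_{\boldsymbol{\Sigma}_0(\rho)}$ using $d(\mathbf{C}^{-1}) = -\mathbf{C}^{-1}(d\mathbf{C})\mathbf{C}^{-1}$ and the quotient rule, producing a sum of three scalar expressions of the form $\mathbf{s}_0^* \mathbf{A}_1 (d\mathbf{C}) \mathbf{A}_2 \mathbf{s}_0$. Each such term is put in vectorized form via $\mathbf{s}_0^* \mathbf{A}_1 \mathbf{X} \mathbf{A}_2 \mathbf{s}_0 = (\mathbf{A}_2 \mathbf{s}_0)^T \otimes (\mathbf{s}_0^* \mathbf{A}_1)\,\textnormal{vec}(\mathbf{X})$ together with $\mathbf{v}^*(\mathbf{a}^T \otimes \mathbf{I}_N) = \mathbf{a}^T \otimes \mathbf{v}^*$, which after collection matches exactly the row vector $\mathbf{c}^*$ given in the theorem statement, so $df_{\boldsymbol{\Sigma}_0(\rho)}(\mathbf{H}) = \mathbf{c}^* \textnormal{vec}(\mathbf{H})$.

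The last step is to translate Lemma 1 (phrased in terms of complex covariance $\mathbf{M}_1$ and pseudo-covariance $\mathbf{M}_2$) into a real-valued variance. Because $\widehat{\textnormal{SNR}}(\rho) - \textnormal{SNR}_0(\rho)$ is real and $\widehat{\mathbf{C}}_N(\rho) - \boldsymbol{\Sigma}_0(\rho)$ is Hermitian, I can write
\begin{align*}
\mathbf{c}^* \textnormal{vec}(\mathbf{H}) = \widetilde{\mathbf{c}}^{\,t} \begin{bmatrix} \Re(\textnormal{vec}(\mathbf{H})) \\ \Im(\textnormal{vec}(\mathbf{H})) \end{bmatrix}.
\end{align*}
A direct computation from $\Re(\mathbf{z}) = (\mathbf{z} + \bar{\mathbf{z}})/2$ and $\Im(\mathbf{z}) = (\mathbf{z} - \bar{\mathbf{z}})/(2\jmath)$ shows that the $2N^2$-dimensional real Gaussian limit of $\sqrt{n}[\Re(\textnormal{vec}(\cdot))^t, \Im(\textnormal{vec}(\cdot))^t]^t$ has precisely the covariance $\Xi$ of the theorem statement. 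Combining the continuous mapping/Slutsky principle applied to $R_n$ with Lemma 1 and this real-vector reformulation yields $\sqrt{n}(\widehat{\textnormal{SNR}}(\rho) - \textnormal{SNR}_0(\rho)) \xrightarrow{d} \mathcal{N}(0, \widetilde{\mathbf{c}}^{\,t}\Xi\widetilde{\mathbf{c}})$.

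The main obstacle I expect is the bookkeeping needed to identify the three vectorized pieces of $df_{\boldsymbol{\Sigma}_0(\rho)}$ with the expression for $\mathbf{c}^*$ in the theorem; the remainder control is essentially automatic because inversion is smooth on a neighborhood of $\boldsymbol{\Sigma}_0(\rho)$ and $\widehat{\mathbf{C}}_N(\rho)$ enters this neighborhood with probability tending to one, while the delta-method and the real/imaginary decomposition of the Gaussian limit are both standard.
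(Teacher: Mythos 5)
Your proposal is correct and follows essentially the same route as the paper: both linearize $\widehat{\textnormal{SNR}}(\rho)$ around $\boldsymbol{\Sigma}_0(\rho)$, feed the CLT of Lemma 1 through the resulting linear form $\mathbf{c}^*\textnormal{vec}(\cdot)$, and convert to a real variance via the $\Xi$ decomposition of the complex Gaussian limit. The only cosmetic difference is that the paper carries out the delta method by hand --- an exact $Q_1+Q_2$ decomposition, the resolvent identity $\widehat{\mathbf{C}}_N^{-1}(\rho)-\boldsymbol{\Sigma}_0^{-1}(\rho)=\widehat{\mathbf{C}}_N^{-1}(\rho)\left(\boldsymbol{\Sigma}_0(\rho)-\widehat{\mathbf{C}}_N(\rho)\right)\boldsymbol{\Sigma}_0^{-1}(\rho)$, and Slutsky's theorem applied to the almost-surely converging prefactors --- rather than invoking a Taylor expansion with an explicit quadratic remainder bound.
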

\begin{proof} See Appendix B for a detailed proof.
\end{proof}
\section{Numerical Results}
In all our simulations, we consider a uniform linear array (ULA) with elements located half a wavelength apart. The desired signal is received at an exploration angle $\theta_0=0$ deg, and the interfering signals are received from the angles $-35$ and $70$ degrees. Moreover, all signals were received at a power $10$ dB above the background noise. In all simulations, we fix the number of antennas to $N=4$. Moreover, we assume that the number of observations $n$ can not exceed $100$
observations ($\frac{n}{N}< 25$), which constitutes the total budget of the system in terms of samples used to estimate the noise-plus-interference covariance matrix. This assumption is quite practical and has been considered in many papers in the literature \cite{mestre,rubio2,ESPAR}. To assess the accuracy of the derived CLTs in both regimes, we will use two different metrics, namely  the symmetrized divergence Kolmogorov–-Smirnov (KS) statistic and the $f-$ divergence denoted as $\mathcal{D}_f\left(P ||
Q\right)$. These two metrics are generally employed to quantify the difference between two continuous probability distributions with CDFs $P$ and $Q$ respectively. The KS statistic between $P$ and $Q$ is given by
\begin{equation}
\label{KStest}
D \triangleq \sup_x \left|P\left(x\right)-Q\left(x\right)\right|.
\end{equation}
while the $f-$divergence with respect to a convex function $f$ satisfying $f\left(1\right)=0$, is defined as follows
\begin{equation}
\label{fdivergence}
\mathcal{D}_f\left(P || Q\right)=\int f\left(\frac{p\left(x\right)}{q\left(x\right)}\right)q\left(x\right)dx,
\end{equation}
where $p$ and $q$ are respectively the corresponding PDFs of $P$ and $Q$. In Table \ref{tab:measure}, we summarize some selected instances of the functions $f$ that we use in the letter.\footnote{Note that the Kullback-Leibler divergence defined in Table I is not a distance.  We thus  use instead its modified version called the symmetrised divergence given by $\mathcal{D}_{KL}\left(P||Q\right)+\mathcal{D}_{KL}\left(Q||P\right)$.}
\begin{table}[H]
\centering
\begin{center}
\begin{tabular}{|c|c|}
  \hline
  Divergence & Corresponding $f\left(t\right)$ \\
  \hline
  Hellinger distance, $\mathcal{H}\left(P||Q\right)$ & $\left(\sqrt{t}-1\right)^2$ \\
  \hline 
  Total variation distance, $\mathcal{T}\left(P||Q\right)$ & $\frac{1}{2}\left|t-1\right|$  \\
  \hline
  Kullback-Leibler divergence, $\mathcal{D}_{KL}\left(P||Q\right)$ & $t\log t$ \\
  \hline
\end{tabular}
\caption{Selected instances of the $f-$divergence} 
\label{tab:measure}
\end{center}
\end{table}

To have a unified notation, we denote by $\mathcal{D}\left(p,q\right)$, the distance between $p$ and $q$, where the metric can be either the KS statistic or the $f$-divergence. 
%
%
%
%
%
%
%
%
With these metrics at hand,  we compare the empirical cumulative distribution function (CDF) of the following quantities
\begin{align*}
& Q_{N,n}\triangleq \sigma_{N,n}^{-1}\sqrt{n}\left(\widehat{\textnormal{SNR}}\left(\rho\right)-\overline{\textnormal{SNR}\left(\rho\right)}\right). \\
&Q_n \triangleq \sigma_n^{-1}\sqrt{n}\left(\widehat{\textnormal{SNR}}\left(\rho\right)-\textnormal{SNR}_0\left(\rho\right)\right).
\end{align*}
with that of the standard normal distribution $\mathcal{N}\left(0,1\right)$.   Letting $\overline{f}_{N,n}, \overline{F}_{N,n}$ and $\overline{f}_n, \overline{F}_{n}$  be respectively the empirical PDFs and CDFs of $Q_{N,n}$ and $Q_n$, we define the following distance metrics
\begin{equation}
\label{kullback_metric}
\begin{split}
&\mathcal{D}_{N,n} \triangleq \mathcal{D}\left(f_{0,1},\overline{f}_{N,n}\right). \\
&\mathcal{D}_{n} \triangleq \mathcal{D}\left(f_{0,1},\overline{f}_{n}\right).
\end{split}
\end{equation}
where $f_{0,1}$ and $F_{0,1}$ denote respectively the PDF and CDF of the standard normal distribution \footnote{$f_{0,1}\left(x\right)=\frac{1}{\sqrt{2\pi}}\exp\left(-\frac{x^2}{2}\right)$ and $F_{0,1}\left(x\right)=1-Q\left(x\right)$, where $Q\left(.\right)$ is the Q-function.}.
\begin{figure}[t!]
       \centering
    \includegraphics[width=3.5in]{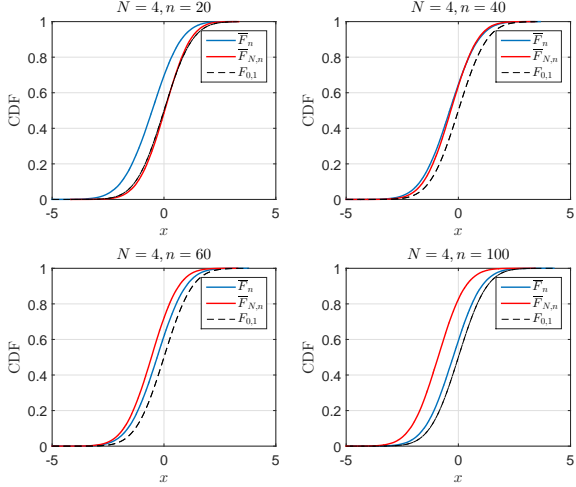}
\caption{Empirical cumulative distribution functions (CDF) of $Q_n$ and $Q_{N,n}$ compared with the CDF of the standard Normal density for different values of $\left(N,n\right)$. $\rho=0.65$.}
\label{fig:CLT_fig}
\end{figure}
\begin{figure}[t!]
       \centering
    \includegraphics[width=3.5in]{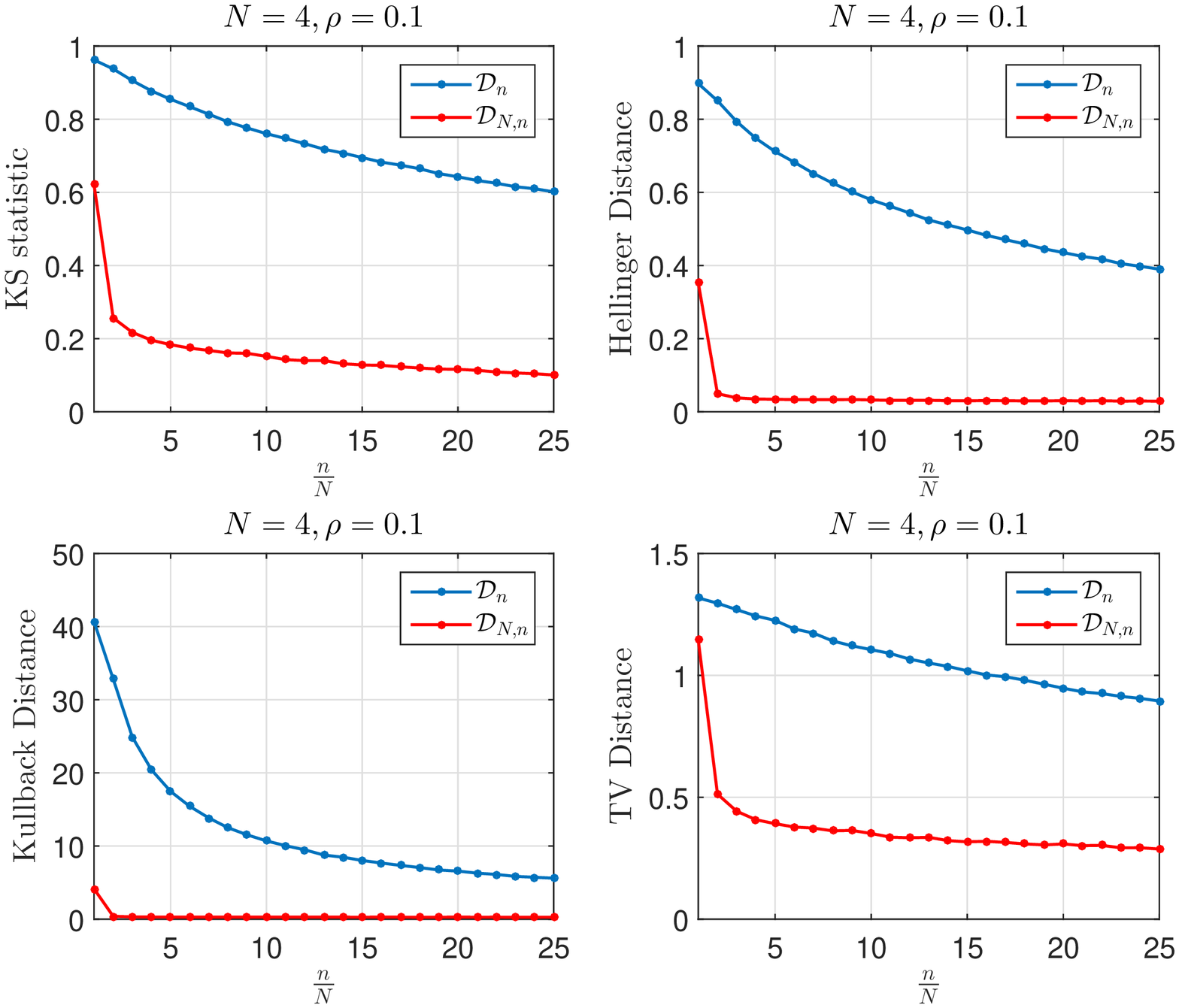}
\caption{$\mathcal{D}_{n}$ and $\mathcal{D}_{N,n}$ as defined in (\ref{kullback_metric}) versus $\frac{n}{N}$. $\rho=0.1$.}
\label{fig:H_small}
\end{figure}
\begin{figure}[t!]
       \centering
    \includegraphics[width=3.5in]{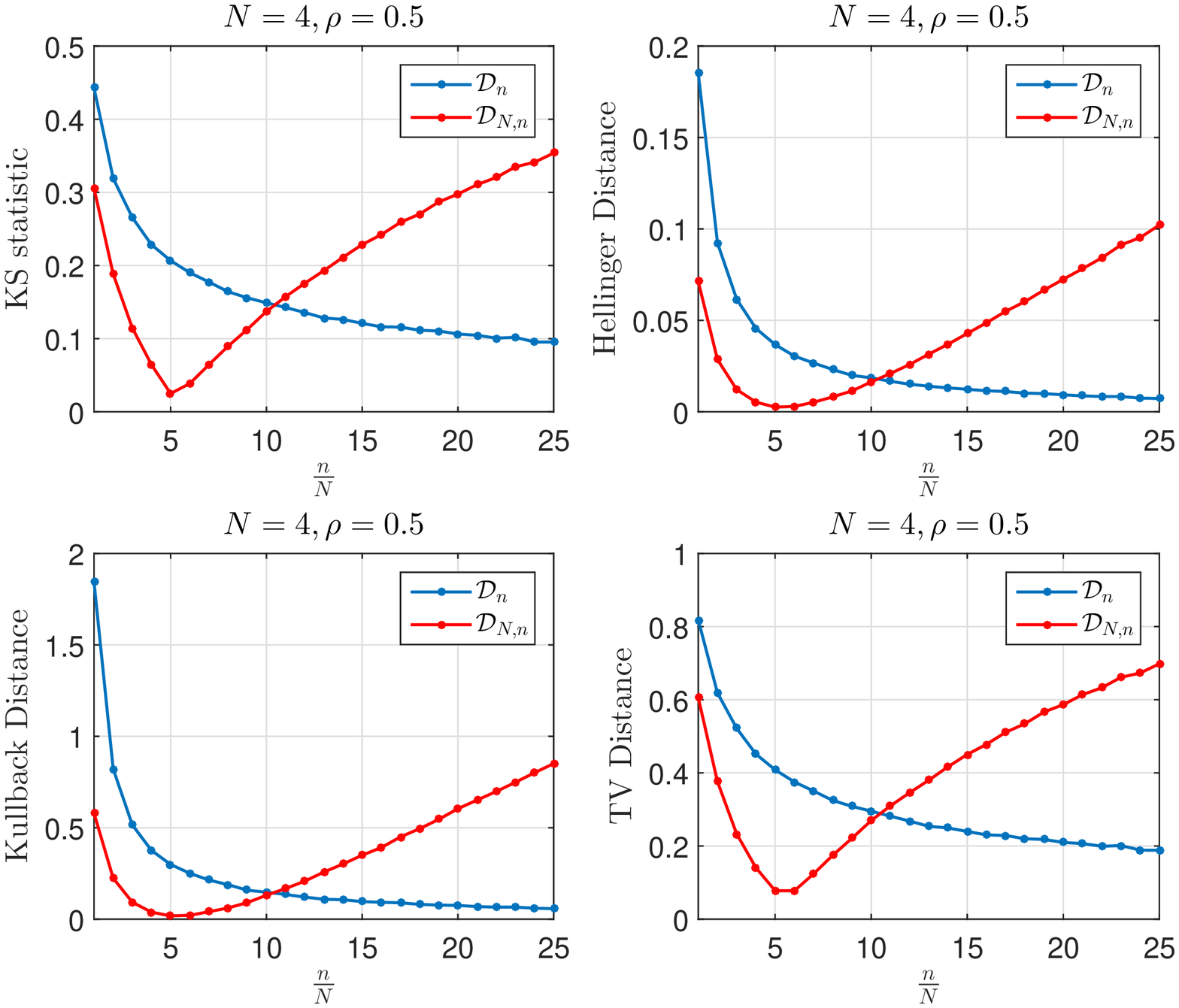}
\caption{$\mathcal{D}_{n}$ and $\mathcal{D}_{N,n}$ as defined in (\ref{kullback_metric}) versus $\frac{n}{N}$. $\rho=0.5$.}
\label{fig:H_mid}
\end{figure}
\begin{figure}[t!]
       \centering
    \includegraphics[width=3.5in]{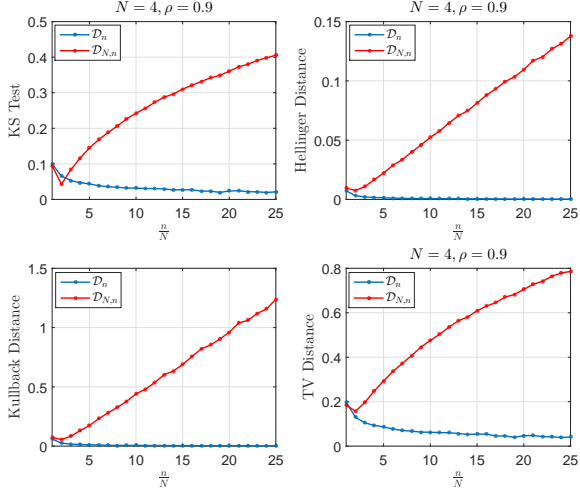}
\caption{$\mathcal{D}_{n}$ and $\mathcal{D}_{N,n}$ as defined in (\ref{kullback_metric}) versus $\frac{n}{N}$. $\rho=0.9$.}
\label{fig:H_high}
\end{figure}

To begin with, we report in Figure \ref{fig:CLT_fig} the empirical CDFs of  both  regimes along with the standard normal CDF. As seen, the accuracy the $n$-large regime is lower than the $\left(N,n\right)$-large regime for small values of $n$. As the number of observations increases, the $\left(N,n\right)-$large regime becomes less accurate. This is to be compared with the $n-$large regime which provides a good fit
 for $n=60$ and $n=100$.\par 
 To reinforce the observations made in Figure \ref{fig:CLT_fig}, we display the distance between the empirical distributions $\overline{f}_n$ and $\overline{f}_{N,n}$ with the standard normal density for different values of $N$ and $n$. In order to investigate the impact of $\rho$ on the performance, we display the different instances of the distance for small ($\rho=0.1$  Figure \ref{fig:H_small}), mid ($\rho=0.5$, Figure \ref{fig:H_mid}) and high ($\rho=0.9$, Figure \ref{fig:H_high}) values of $\rho$. 
 \begin{itemize}
     \item $\rho=0.1$: We observe in this case that the $(N,n)$-Large regime provides more accurate results. This might be related to the fact that the empirical average $  \frac{1}{n}\sum_{i=1}^n \frac{\mathbf{x}_i\mathbf{x}_i^*}{\frac{1}{N}\mathbf{x}_i^*\widehat{\mathbf{C}}_N^{-1} (\rho)\mathbf{x}_i}$ is the dominant term in the expression of $\widehat{\bf C}_N(\rho)$. This averaging is approximated in the $n$-Large regime by
         $N\left(1-\rho\right)\mathbb{E}\left[\frac{\mathbf{x}\mathbf{x}^*}{\mathbf{x}^*\mathbf{\Sigma}_0^{-1}(\rho) \mathbf{x}}\right]$, which can be not well-estimated as $n$ is limited to $100$.  On the other hand, the $(n,N)-$ large regime is more accurate since it leverages the double averaging over $n$ and $N$. 
     \item $\rho=0.9$: In this case, $\rho$ is high, thus, the estimated \text{SNR} behaves as a deterministic quantity since $\widehat{\mathbf{C}}_N (\rho) \simeq \mathbf{I}_N$ which is the case for $\mathbf{\Sigma}_0(\rho)$ as well. This is clearly expected by the large$-n$ regime. However, the large$-(n,N)$ regime fails to predict in an accurate way the performances. One possible explanation can be related to the fact that as $\rho$ tends to $1$, quantity $\alpha(\rho)$ converges to
         infinity, causing the fluctuations to be not properly predicted.  
 \item $\rho=0.5:$ In this case, the main observation is that the $\left(N,n\right)-$large regime has a better fit to the standard normal distribution for lower values of $n$, while for $\frac{n}{N}\geq10$, the $n-$large regime starts to exhibit a better fit.
 \end{itemize}
 As a conclusion, for mid values of $\rho$, it is better to work under the $\left(N,n\right)-$large regime as long as the number of observations is low. As we get more observations, the $n-$large regime yields a better performance.
\section{Conclusion}
In this paper, we have analyzed the asymptotic behaviour of the Capon's MVDR beamformer when using the regularized Tyler estimator (RTE) for both the $n-$large and the $\left(N,n\right)-$large regimes. Based on recent results on the convergence of the RTE, we have analyzed the fluctuations of the SNR at the output of the MVDR. Using well known divergence metrics, we have examined the accuracy of both regimes and determined which regime is more accurate and thus more convenient to use.
 \section*{Appendix A}
\section*{Proof of Theorem 1}
The proof hinges on recent  results concerning the asymptotic behaviour of the RTE developed in \cite{abla_romain} and the second order analysis of the SNR at the output of the MVDR derived  in \cite{rubio}. As we shall see next, these results  lead together  to the sought-for CLT. 
\subsubsection{First ingredient: Approximation of the RTE estimate $\widehat{\mathbf{C}}_N (\rho)$}
Studying the RTE in the large$-n,N$ regime is not an easy task, as the RTE does not follow a standard random matrix model. To overcome these issues, the work in \cite{abla_romain} shows that as far as quadratic forms are concerned the asymptotic behaviour of the $\widehat{\mathbf{C}}_N (\rho)$ is the same as another random object which, contrary to $\widehat{\mathbf{C}}_N (\rho)$ can be studied using standard RMT tools. More formally, we have the following convergence
results,  
\begin{equation}
\label{convergence}
N^{1-\epsilon}|\mathbf{u}^*\widehat{\mathbf{C}}^k_N (\rho)\mathbf{v}-\mathbf{u}^*\widehat{\mathbf{S}}^k_N (\rho)\mathbf{v} | \xrightarrow[]{a.s.}0,
\end{equation}
where $\mathbf{u}$ and $\mathbf{v}$ are unit norm vectors in $\mathbb{C}^N$ and $\widehat{\bf S}_N(\rho)$ is given by
\begin{align*}
\widehat{\mathbf{S}}_N (\rho) = \frac{\rho}{\alpha\left(\rho\right)} \frac{1}{n}\sum_{i=1}^n \mathbf{z}_i \mathbf{z}_i^* + \rho \mathbf{I}_N, \mathbf{z}_i= \mathbf{\Sigma}_N^{\frac{1}{2}} \mathbf{w}_i, \: i=1,\cdots,n.
\end{align*}
As will be shown next, this convergence implies that the SNR has the same fluctuations if $\widehat{\bf C}_N$ is replaced by $\widehat{\mathbf{S}}_N (\rho)$. 

As the SNR is scale invariant, the fluctuations would be the same when  $\widehat{\bf C}_N$ is replaced by $\widetilde{\bf S}_N(\rho)$ given by:
\begin{equation}
 \widetilde{\bf S}_N(\rho)=\frac{1}{n}\sum_{i=1}^n \mathbf{z}_i \mathbf{z}_i^* + \alpha\left(\rho\right)\mathbf{I}_N.
 \label{approximation}
 \end{equation}

 Interestingly, the approximation matrix in (\ref{approximation}) follows the same structure of a sample correlation matrix with a diagonal loading factor $\alpha\left(\rho\right)$ that was considered in \cite{mestre}. The fluctuations of the SNR will be thus obtained by simply leveraging the results of \cite{mestre}. 
\subsubsection{Second-order Analysis of the SNR of Diagonally Loaded MVDR Filters}
As discussed above, to prove Theorem 1, it suffices to show that the SNR has the same fluctuations when the RTE is replaced by $\widehat{\bf S}_N(\rho)$. 
As per the Slutsky Lemma, this amounts to showing that:
\begin{align*}
    &\sqrt{n}\left(\frac{\left(\mathbf{s^*_0}\widehat{\mathbf{C}}_N^{-1} (\rho)\mathbf{s_0}\right)^2}{\mathbf{s^*_0}\widehat{\mathbf{C}}_N^{-1} (\rho)\mathbf{\Sigma}_N\widehat{\mathbf{C}}_N^{-1} (\rho)\mathbf{s_0}}-\frac{\left(\mathbf{s^*_0}\widehat{\mathbf{S}}_N^{-1} (\rho)\mathbf{s_0}\right)^2}{\mathbf{s^*_0}\widehat{\mathbf{S}}_N^{-1} (\rho)\mathbf{\Sigma}_N\widehat{\mathbf{S}}_N^{-1} (\rho)\mathbf{s_0}}\right)\\
&
\xrightarrow[\frac{N}{n}\to c]{a.s.}0.
\end{align*}
\normalsize
To this end, we first decompose the above term as: 
\begin{align*}
&\sqrt{n}\left(\frac{\left(\mathbf{s^*_0}\widehat{\mathbf{C}}_N^{-1} (\rho)\mathbf{s_0}\right)^2}{\mathbf{s^*_0}\widehat{\mathbf{C}}_N^{-1} (\rho)\mathbf{\Sigma}_N\widehat{\mathbf{C}}_N^{-1} (\rho)\mathbf{s_0}}-\frac{\left(\mathbf{s^*_0}\widehat{\mathbf{S}}_N^{-1} (\rho)\mathbf{s_0}\right)^2}{\mathbf{s^*_0}\widehat{\mathbf{S}}_N^{-1} (\rho)\mathbf{\Sigma}_N\widehat{\mathbf{S}}_N^{-1} (\rho)\mathbf{s_0}}\right) \\
& = \sqrt{n}\left(\frac{\left(\mathbf{s^*_0}\widehat{\mathbf{C}}_N^{-1} (\rho)\mathbf{s_0}\right)^2}{\mathbf{s^*_0}\widehat{\mathbf{C}}_N^{-1} (\rho)\mathbf{\Sigma}_N\widehat{\mathbf{C}}_N^{-1} (\rho)\mathbf{s_0}}-\frac{\left(\mathbf{s^*_0}\widehat{\mathbf{S}}_N^{-1} (\rho)\mathbf{s_0}\right)^2}{\mathbf{s^*_0}\widehat{\mathbf{C}}_N^{-1} (\rho)\mathbf{\Sigma}_N\widehat{\mathbf{C}}_N^{-1} (\rho)\mathbf{s_0}}\right) \\
& + \sqrt{n}\left(\frac{\left(\mathbf{s^*_0}\widehat{\mathbf{S}}_N^{-1} (\rho)\mathbf{s_0}\right)^2}{\mathbf{s^*_0}\widehat{\mathbf{C}}_N^{-1} (\rho)\mathbf{\Sigma}_N\widehat{\mathbf{C}}_N^{-1} (\rho)\mathbf{s_0}}-\frac{\left(\mathbf{s^*_0}\widehat{\mathbf{S}}_N^{-1} (\rho)\mathbf{s_0}\right)^2}{\mathbf{s^*_0}\widehat{\mathbf{S}}_N^{-1} (\rho)\mathbf{\Sigma}_N\widehat{\mathbf{S}}_N^{-1} (\rho)\mathbf{s_0}}\right) \\
& \triangleq \xi_1 + \xi_2.
\end{align*}
The term $\xi_1$ can be rewritten as 
\begin{equation}
\label{xi1}
\begin{split}
\xi_1 &= \frac{\sqrt{n}\left(\mathbf{s^*_0}\widehat{\mathbf{C}}_N^{-1} (\rho)\mathbf{s_0}-\mathbf{s^*_0}\widehat{\mathbf{S}}_N^{-1} (\rho)\mathbf{s_0}\right)}{\mathbf{s^*_0}\widehat{\mathbf{C}}_N^{-1} (\rho)\mathbf{\Sigma}_N\widehat{\mathbf{C}}_N^{-1} (\rho)\mathbf{s_0}}. \\
& \times \left(\mathbf{s^*_0}\widehat{\mathbf{C}}_N^{-1} (\rho)\mathbf{s_0}+\mathbf{s^*_0}\widehat{\mathbf{S}}_N^{-1} (\rho)\mathbf{s_0}\right)
\end{split}
\end{equation}
Then, by the results of (\ref{convergence}), we have the following convergence
\begin{align*}
&\sqrt{n}\left(\mathbf{s^*_0}\widehat{\mathbf{C}}_N^{-1} (\rho)\mathbf{s_0}-\mathbf{s^*_0}\widehat{\mathbf{S}}_N^{-1} (\rho)\mathbf{s_0}\right) \xrightarrow[\frac{N}{n}\to c]{a.s.}0 \\
\end{align*}
Moreover, since,
\begin{align*}
\left \| \widehat{\mathbf{C}}_N^{-1} (\rho)-\widehat{\mathbf{S}}_N^{-1} (\rho) \right \|\xrightarrow[]{a.s.}0.
\end{align*}
then, any well-behaved functional of $\widehat{\mathbf{C}}_N^{-1} (\rho)$ converges $a.s.$ to the same functional of $\widehat{\mathbf{S}}_N^{-1} (\rho)$. In particular, we do have:
$$
\mathbf{s^*_0}\widehat{\mathbf{C}}_N^{-1} (\rho)\mathbf{s_0} - \mathbf{s^*_0}\widehat{\mathbf{S}}_N^{-1} (\rho)\mathbf{s_0}\xrightarrow[\frac{N}{n}\to c]{a.s.} 0.
$$
and
\begin{align*}
\mathbf{s^*_0}\widehat{\mathbf{C}}_N^{-1} (\rho)\mathbf{\Sigma}_N\widehat{\mathbf{C}}_N^{-1} (\rho)\mathbf{s_0} - \mathbf{s^*_0}\widehat{\mathbf{S}}_N^{-1} (\rho)\mathbf{\Sigma}_N\widehat{\mathbf{S}}_N^{-1} (\rho)\mathbf{s_0}\xrightarrow[]{a.s.}0.
\end{align*}
All this leads to
\begin{align*}
\xi_1 \xrightarrow[\frac{N}{n}\to c]{P} 0.
\end{align*}
We now handle the term $\xi_2$. By a similar reasoning, $\xi_2$ can be rewritten as follows
\begin{align*}
\xi_2 
& = \\
& \frac{\left(\mathbf{s^*_0}\widehat{\mathbf{S}}_N^{-1} (\rho)\mathbf{s_0}\right)^2} {\left(\mathbf{s^*_0}\widehat{\mathbf{C}}_N^{-1} (\rho)\mathbf{\Sigma}_N\widehat{\mathbf{C}}_N^{-1} (\rho)\mathbf{s_0}\right)\left(\mathbf{s^*_0}\widehat{\mathbf{S}}_N^{-1} (\rho)\mathbf{\Sigma}_N\widehat{\mathbf{S}}_N^{-1} (\rho)\mathbf{s_0}\right)} \\
& \times \sqrt{n} \left(\mathbf{s^*_0}\widehat{\mathbf{S}}_N^{-1} (\rho)\mathbf{\Sigma}_N\widehat{\mathbf{S}}_N^{-1} (\rho)\mathbf{s_0} - \mathbf{s^*_0}\widehat{\mathbf{C}}_N^{-1} (\rho)\mathbf{\Sigma}_N\widehat{\mathbf{C}}_N^{-1} (\rho)\mathbf{s_0}\right).
\end{align*}
We now refer to the special structure of $\mathbf{\Sigma}_N$ and rewrite $\xi_2$ as follows
\begin{align*}
\xi_2
& = \frac{\left(\mathbf{s^*_0}\widehat{\mathbf{S}}_N^{-1} (\rho)\mathbf{s_0}\right)^2} {\left(\mathbf{s^*_0}\widehat{\mathbf{C}}_N^{-1} (\rho)\mathbf{\Sigma}_N\widehat{\mathbf{C}}_N^{-1} (\rho)\mathbf{s_0}\right)\left(\mathbf{s^*_0}\widehat{\mathbf{S}}_N^{-1} (\rho)\mathbf{\Sigma}_N\widehat{\mathbf{S}}_N^{-1} (\rho)\mathbf{s_0}\right)} \\
& \times \sqrt{n} \Biggl[\sigma_0^2 \left(\mathbf{s^*_0}\widehat{\mathbf{S}}_N^{-2} (\rho)\mathbf{s_0}-\mathbf{s^*_0}\widehat{\mathbf{C}}_N^{-2} (\rho)\mathbf{s_0}\right) \\
& + \sum_{i=1}^q \sigma_i^2 \left(|\mathbf{s^*_0}\widehat{\mathbf{S}}_N^{-1} (\rho)\mathbf{a}\left(\theta_i\right)|^2-|\mathbf{s^*_0}\widehat{\mathbf{C}}_N^{-1} (\rho)\mathbf{a}\left(\theta_i\right)|^2\right)\Biggr].
\end{align*}
Noticing that 
\begin{align*}
&|\mathbf{s^*_0}\widehat{\mathbf{S}}_N^{-1} (\rho)\mathbf{a}\left(\theta_i\right)|^2-|\mathbf{s^*_0}\widehat{\mathbf{C}}_N^{-1} (\rho)\mathbf{a}\left(\theta_i\right)|^2  \\
& = \left(|\mathbf{s^*_0}\widehat{\mathbf{S}}_N^{-1} (\rho)\mathbf{a}\left(\theta_i\right)|-|\mathbf{s^*_0}\widehat{\mathbf{C}}_N^{-1} (\rho)\mathbf{a}\left(\theta_i\right)|\right) \\
& \times \left(|\mathbf{s^*_0}\widehat{\mathbf{S}}_N^{-1} (\rho)\mathbf{a}\left(\theta_i\right)|+|\mathbf{s^*_0}\widehat{\mathbf{C}}_N^{-1} (\rho)\mathbf{a}\left(\theta_i\right)|\right).
\end{align*}
and resorting to the same arguments   used in the control of  $\xi_1$, it follows that 
\begin{align*}
\xi_2 \xrightarrow[\frac{N}{n}\to c]{a.s.} 0.
\end{align*}
This concludes the proof of Theorem 1.
\section*{Appendix B}
\section*{Proof of Theorem 2}
For ease of presentation, we omit the argument $\rho$ in the SNR expressions.
According to \cite{abla}, the asymptotic limit of $\widehat{\textnormal{SNR}}$ would be
\begin{align*}
\textnormal{SNR}_0=\frac{\left(\mathbf{s^*_0}\mathbf{\Sigma}_0^{-1}(\rho)\mathbf{s_0}\right)^2}{\mathbf{s^*_0}\mathbf{\Sigma}_0^{-1}(\rho)\mathbf{\Sigma}_N\mathbf{\Sigma}_0^{-1}(\rho)\mathbf{s_0}}.
\end{align*}
The objective here is to study the fluctuations of the SNR around $\textnormal{SNR}_0$. To this end, we decompose $\sqrt{n}\left(\widehat{\textnormal{SNR}}-\textnormal{SNR}_0\right)$ by subtracting and adding $\sqrt{n}\frac{\left(\mathbf{s^*_0}\mathbf{\Sigma}_0^{-1}(\rho)\mathbf{s_0}\right)^2}{\mathbf{s^*_0}\widehat{\mathbf{C}}^{-1}_N (\rho)\mathbf{\Sigma}_N\widehat{\mathbf{C}}^{-1}_N (\rho)\mathbf{s_0}}$ resulting in expression \eqref{Q1Q2} given on the top of the next page.
\begin{figure*}[!t]
\begin{equation} \label{Q1Q2}
\begin{split}
&\sqrt{n}\left(\widehat{\textnormal{SNR}}-\textnormal{SNR}_0\right) =\\
& \underbrace{\frac{\sqrt{n}}{\mathbf{s^*_0}\widehat{\mathbf{C}}^{-1}_N (\rho)\mathbf{\Sigma}_N\widehat{\mathbf{C}}^{-1}_N (\rho)\mathbf{s_0}}\left(\mathbf{s^*_0}\widehat{\mathbf{C}}^{-1}_N (\rho)\mathbf{s_0}-\mathbf{s^*_0}\mathbf{\Sigma}_0^{-1}(\rho)\mathbf{s_0}\right)\left(\mathbf{s^*_0}\widehat{\mathbf{C}}^{-1}_N (\rho)\mathbf{s_0}+\mathbf{s^*_0}\mathbf{\Sigma}_0^{-1}(\rho)\mathbf{s_0}\right)}_{Q_1} \\
&+ \underbrace{\frac{\sqrt{n}\left(\mathbf{s^*_0}\mathbf{\Sigma}_0^{-1}(\rho)\mathbf{s_0}\right)^2}{\left(\mathbf{s^*_0}\widehat{\mathbf{C}}^{-1}_N (\rho)\mathbf{\Sigma}_N\widehat{\mathbf{C}}^{-1}_N (\rho)\mathbf{s_0}\right)\left(\mathbf{s^*_0}\mathbf{\Sigma}_0^{-1}(\rho)\mathbf{\Sigma}_N\mathbf{\Sigma}_0^{-1}(\rho)\mathbf{s_0}\right)}\left(\mathbf{s^*_0}\mathbf{\Sigma}_0^{-1}(\rho)\mathbf{\Sigma}_N\mathbf{\Sigma}_0^{-1}(\rho)\mathbf{s_0}-\mathbf{s^*_0}\widehat{\mathbf{C}}^{-1}_N (\rho)\mathbf{\Sigma}_N\widehat{\mathbf{C}}^{-1}_N (\rho)\mathbf{s_0}\right)}_{Q_2}
\end{split}.
\end{equation}
\hrulefill
\vspace*{4pt}
\end{figure*}
We will now treat subsequently the terms $Q_1$ and $Q_2$ defined in \eqref{Q1Q2}.
First, note that using the resolvent identity:
\begin{equation}
\widehat{\mathbf{C}}^{-1}_N (\rho)-\mathbf{\Sigma}_0^{-1}(\rho)=\widehat{\mathbf{C}}^{-1}_N (\rho)\left(\mathbf{\Sigma}_0(\rho)-\widehat{\mathbf{C}}_N (\rho)\right)\mathbf{\Sigma}_0^{-1}(\rho).
\end{equation}
along with the relation:
\begin{align*}
\mathbf{x^*}\mathbf{A}\mathbf{y}& = \textnormal{tr}\left(\mathbf{x^*}\mathbf{A}\mathbf{y}\right)\\&= 
\textnormal{vec}^*(\mathbf{x})\textnormal{vec}\left(\mathbf{Ay}\right)
\\& =\mathbf{x}^* \left(\mathbf{y^t}\otimes \mathbf{I}_N\right)\textnormal{vec}(\mathbf{A}).
\end{align*}
for ${\bf x}\in\mathbb{C}^{N\times 1},{\bf y}\in\mathbb{C}^{N\times 1}$ and ${\bf A}\in \mathbb{C}^{N\times N}$, yields
\begin{align*}
&\sqrt{n}\: \mathbf{s^*_0}\left(\widehat{\mathbf{C}}^{-1}_N (\rho)-\mathbf{\Sigma}_0^{-1}(\rho)\right)\mathbf{s_0}\\
& =\sqrt{n} \:\mathbf{s^*_0}\widehat{\mathbf{C}}^{-1}_N (\rho)\left(\mathbf{\Sigma}_0(\rho)-\widehat{\mathbf{C}}_N (\rho)\right)\mathbf{\Sigma}_0^{-1}(\rho)\mathbf{s_0} \\
& = \sqrt{n}\:\mathbf{s^*_0}\widehat{\mathbf{C}}^{-1}_N (\rho) \left[\left(\mathbf{s_0^t} \mathbf{\Sigma}_0^{-1}(\rho)\right) \otimes \mathbf{I}_N \right]\\
& \times \textnormal{vec}\left(\mathbf{\Sigma}_0(\rho)-\widehat{\mathbf{C}}_N (\rho)\right).
\end{align*}
Using the result of Lemma 1, we have 
\begin{align*}
\sqrt{n}\:\textnormal{vec}\left(\widehat{\mathbf{C}}_N (\rho)-\mathbf{\Sigma}_0(\rho)\right)\xrightarrow[n\to+\infty]{d} \mathbf{x} \sim \mathcal{GCN}\left(0,\mathbf{M}_1,\mathbf{M}_2\right),
\end{align*} 
where $\mathcal{GCN}\left(\mathbf{0},\mathbf{M}_1,\mathbf{M}_2\right)$ denotes the Generalized Complex Normal distribution with zero-mean, covariance matrix $\mathbf{M}_1$ and pseudo-covariance matrix $\mathbf{M}_2$.
Finally, using the following convergence relations 
\vspace{-0.15cm}
\begin{align*}
&\mathbf{s^*_0}\widehat{\mathbf{C}}^{-1}_N (\rho) \left[\left(\mathbf{s_0^t} \mathbf{\Sigma}_0^{-1}(\rho)\right) \otimes \mathbf{I}_N \right]\\
& \xrightarrow[n\to+\infty]{a.s}\mathbf{s^*_0}\mathbf{\Sigma}_0^{-1}(\rho)\left[\left(\mathbf{s_0^t} \mathbf{\Sigma}_0^{-1}(\rho)\right) \otimes \mathbf{I}_N \right]. \\
&\mathbf{s^*_0}\widehat{\mathbf{C}}^{-1}_N (\rho)\mathbf{s_0}+\mathbf{s^*_0}\mathbf{\Sigma}_0^{-1}(\rho)\mathbf{s_0} \xrightarrow[n\to+\infty]{a.s} 2 \:\mathbf{s^*_0}\mathbf{\Sigma}_0^{-1}(\rho)\mathbf{s_0}. \\
& \frac{1}{\mathbf{s^*_0}\widehat{\mathbf{C}}^{-1}_N (\rho)\mathbf{\Sigma}_N\widehat{\mathbf{C}}^{-1}_N (\rho)\mathbf{s_0}} \xrightarrow[n\to+\infty]{a.s} 
\frac{1}{\mathbf{s^*_0}\mathbf{\Sigma}_0^{-1}(\rho)\mathbf{\Sigma}_N\mathbf{\Sigma}_0^{-1}(\rho)\mathbf{s_0}}.
\end{align*}
it follows from the Slutsky's theorem \cite{stefanski} that:
\begin{equation}
\begin{split}
Q_1 \xrightarrow[n\to+\infty]{d} & \frac{-2 \:\mathbf{s^*_0}\mathbf{\Sigma}_0^{-1}(\rho)\mathbf{s_0} }{\mathbf{s^*_0}\mathbf{\Sigma}_0^{-1}(\rho)\mathbf{\Sigma}_N\mathbf{\Sigma}_0^{-1}(\rho)\mathbf{s_0}}\mathbf{s^*_0}\mathbf{\Sigma}_0^{-1}(\rho)\\ 
&\times \left[\left(\mathbf{s_0^t} \mathbf{\Sigma}_0^{-1}(\rho)\right) \otimes \mathbf{I}_N \right] \mathbf{x}
\end{split}.
\end{equation}
We now handle $Q_2$. To this end, we treat the term $\sqrt{n}\:\left(\mathbf{s^*_0}\mathbf{\Sigma}_0^{-1}(\rho)\mathbf{\Sigma}_N\mathbf{\Sigma}_0^{-1}(\rho)\mathbf{s_0}-\mathbf{s^*_0}\widehat{\mathbf{C}}^{-1}_N (\rho)\mathbf{\Sigma}_N\widehat{\mathbf{C}}^{-1}_N (\rho)\mathbf{s_0}\right)$ as follows
\vspace{-0.2cm}
\begin{align*}
&\sqrt{n}\:\left(\mathbf{s^*_0}\mathbf{\Sigma}_0^{-1}(\rho)\mathbf{\Sigma}_N\mathbf{\Sigma}_0^{-1}(\rho)\mathbf{s_0}-\mathbf{s^*_0}\widehat{\mathbf{C}}^{-1}_N (\rho)\mathbf{\Sigma}_N\widehat{\mathbf{C}}^{-1}_N (\rho)\mathbf{s_0}\right)\\
&= \sqrt{n}\:\Biggl(\mathbf{s^*_0}\mathbf{\Sigma}_0^{-1}(\rho)\mathbf{\Sigma}_N\mathbf{\Sigma}_0^{-1}(\rho)\mathbf{s_0}-\mathbf{s^*_0}\mathbf{\Sigma}_0^{-1}(\rho)\mathbf{\Sigma}_N\widehat{\mathbf{C}}^{-1}_N (\rho)\mathbf{s_0}\\ &+\mathbf{s^*_0}\mathbf{\Sigma}_0^{-1}(\rho)\mathbf{\Sigma}_N\widehat{\mathbf{C}}^{-1}_N (\rho)\mathbf{s_0}- \mathbf{s^*_0}\widehat{\mathbf{C}}^{-1}_N (\rho)\mathbf{\Sigma}_N\widehat{\mathbf{C}}^{-1}_N (\rho)\mathbf{s_0}\Biggr) \\
& = \sqrt{n}\:\mathbf{s^*_0}\mathbf{\Sigma}_0^{-1}(\rho)\mathbf{\Sigma}_N\left(\mathbf{\Sigma}_0^{-1}(\rho)-\widehat{\mathbf{C}}^{-1}_N (\rho)\right)\mathbf{s_0}  \\& 
+ \sqrt{n}\: \mathbf{s^*_0}\left(\mathbf{\Sigma}_0^{-1}(\rho)-\widehat{\mathbf{C}}^{-1}_N (\rho)\right)\mathbf{\Sigma}_N\widehat{\mathbf{C}}^{-1}_N (\rho)\mathbf{s_0}
.\end{align*}
Similarly, using the resolvent identity, we can write 
\begin{align*}
&\sqrt{n}\:\left(\mathbf{s^*_0}\mathbf{\Sigma}_0^{-1}(\rho)\mathbf{\Sigma}_N\mathbf{\Sigma}_0^{-1}(\rho)\mathbf{s_0}-\mathbf{s^*_0}\widehat{\mathbf{C}}^{-1}_N (\rho)\mathbf{\Sigma}_N\widehat{\mathbf{C}}^{-1}_N (\rho)\mathbf{s_0}\right) \\
&= \sqrt{n}\:\mathbf{s^*_0}\mathbf{\Sigma}_0^{-1}(\rho)\mathbf{\Sigma}_N \mathbf{\Sigma}_0^{-1}(\rho) \left(\widehat{\mathbf{C}}_N (\rho)-\mathbf{\Sigma}_0(\rho)\right)\widehat{\mathbf{C}}^{-1}_N (\rho)\mathbf{s_0} \\ &+ \sqrt{n}\: \mathbf{s^*_0}\mathbf{\Sigma}_0^{-1}(\rho) \left(\widehat{\mathbf{C}}_N (\rho)-\mathbf{\Sigma}_0(\rho)\right)\widehat{\mathbf{C}}^{-1}_N (\rho)\mathbf{\Sigma}_N\widehat{\mathbf{C}}^{-1}_N (\rho)\mathbf{s_0} \\
&= \sqrt{n}\: \mathbf{s^*_0}\mathbf{\Sigma}_0^{-1}(\rho)\mathbf{\Sigma}_N \mathbf{\Sigma}_0^{-1}(\rho) \left[\left(\mathbf{s_0^t}\widehat{\mathbf{C}}^{-1}_N (\rho)\right)\otimes \mathbf{I}_N\right] \\
& \times \textnormal{vec}\left(\widehat{\mathbf{C}}_N (\rho)-\mathbf{\Sigma}_0(\rho)\right) 
+ \sqrt{n}\: \mathbf{s^*_0}\mathbf{\Sigma}_0^{-1}(\rho) \\
& \times \left[\left(\widehat{\mathbf{C}}^{-1}_N (\rho)\mathbf{\Sigma}_N\widehat{\mathbf{C}}^{-1}_N (\rho)\mathbf{s_0}\right)^t \otimes \mathbf{I}_N\right]\textnormal{vec}\left(\widehat{\mathbf{C}}_N (\rho)-\mathbf{\Sigma}_0(\rho)\right)
.\end{align*}
Also note that
\begin{align*}
& \frac{\left(\mathbf{s^*_0}\mathbf{\Sigma}_0^{-1}(\rho)\mathbf{s_0}\right)^2}{\left(\mathbf{s^*_0}\widehat{\mathbf{C}}^{-1}_N (\rho)\mathbf{\Sigma}_N\widehat{\mathbf{C}}^{-1}_N (\rho)\mathbf{s_0}\right)\left(\mathbf{s^*_0}\mathbf{\Sigma}_0^{-1}(\rho)\mathbf{\Sigma}_N\mathbf{\Sigma}_0^{-1}(\rho)\mathbf{s_0}\right)}  \\
& \xrightarrow[n\to+\infty]{a.s}
\frac{\left(\mathbf{s^*_0}\mathbf{\Sigma}_0^{-1}(\rho)\mathbf{s_0}\right)^2}{\left(\mathbf{s^*_0}\mathbf{\Sigma}_0^{-1}(\rho)\mathbf{\Sigma}_N\mathbf{\Sigma}_0^{-1}(\rho)\mathbf{s_0}\right)^2}
\end{align*}
Thus, by means of Slutsky's theorem, it follows that
\begin{align*}
Q_2 \xrightarrow[n\to+\infty]{d}
& \frac{\left(\mathbf{s^*_0}\mathbf{\Sigma}_0^{-1}(\rho)\mathbf{s_0}\right)^2}{\left(\mathbf{s^*_0}\mathbf{\Sigma}_0^{-1}(\rho)\mathbf{\Sigma}_N\mathbf{\Sigma}_0^{-1}(\rho)\mathbf{s_0}\right)^2} \Biggl[\mathbf{s^*_0}\mathbf{\Sigma}_0^{-1}(\rho)\mathbf{\Sigma}_N \mathbf{\Sigma}_0^{-1}(\rho) \\
& \times \left[\left(\mathbf{s_0^t}\mathbf{\Sigma}_0^{-1}(\rho)\right)\otimes \mathbf{I}_N\right]+\mathbf{s^*_0}\mathbf{\Sigma}_0^{-1}(\rho) \\& \times \left[\left(\mathbf{\Sigma}_0^{-1}(\rho)\mathbf{\Sigma}_N\mathbf{\Sigma}_0^{-1}(\rho)\mathbf{s_0}\right)^t \otimes \mathbf{I}_N\right]\Biggr] \mathbf{x}.
\end{align*}
Gathering the convergence results of $Q_1$ and $Q_2$, we thus obtain:
\begin{align*}
\sqrt{n}\left(\widehat{\textnormal{SNR}}-\textnormal{SNR}_0\right) \xrightarrow[n\to+\infty]{d}
\mathbf{c^*}\mathbf{x}.
\end{align*}
Noticing that 
\begin{equation}
\begin{split}
\mathbf{c^*}\mathbf{x} &=  \Re(\mathbf{c})^t\Re(\mathbf{x})+\Im(\mathbf{c})^t\Im(\mathbf{c}) \\
&= \widetilde{\mathbf{c}}^t \mathbf{v},
\end{split}
\end{equation}
where $\mathbf{v}=\left[\Re(\mathbf{x})^t \Im(\mathbf{x})^t\right]^t$, it suffices thus to derive the distribution of $\mathbf{v}$. This follows from the following Lemma:
\begin{lemma}
Let $\mathbf{x}=\left(x_1, x_2, \cdots, x_k\right)^t$ be a zero-mean complex jointly-Gaussian random vector with covariance $\mathbf{M}_1$ and pseudo-covariance $\mathbf{M}_2$ and let $\mathbf{v}=\left[\Re(\mathbf{x})^t \Im(\mathbf{x})^t\right]^t$. Then, following the results of \cite{gallager},
\begin{equation}
\mathbf{v} \sim \mathcal{N}\left(\mathbf{0},\Xi\right)
\end{equation}
\label{convergence}
\end{lemma}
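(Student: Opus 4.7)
The plan is to prove the lemma by directly computing the covariance matrix of the real-stacked vector $\mathbf{v}=[\Re(\mathbf{x})^t,\, \Im(\mathbf{x})^t]^t$ from the given $\mathbf{M}_1$ and $\mathbf{M}_2$, and then invoking the fact that a linear (real) transformation of a jointly complex Gaussian vector is jointly real Gaussian, so its distribution is entirely determined by its mean (zero) and covariance.

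The first step is to write $\mathbf{x}=\mathbf{x}_R+j\mathbf{x}_I$ with $\mathbf{x}_R=\Re(\mathbf{x})$ and $\mathbf{x}_I=\Im(\mathbf{x})$, and then expand the defining identities
\begin{align*}
\mathbf{M}_1 &= \mathbb{E}[\mathbf{x}\mathbf{x}^*]=\mathbb{E}[(\mathbf{x}_R+j\mathbf{x}_I)(\mathbf{x}_R-j\mathbf{x}_I)^t], \\
\mathbf{M}_2 &= \mathbb{E}[\mathbf{x}\mathbf{x}^t]=\mathbb{E}[(\mathbf{x}_R+j\mathbf{x}_I)(\mathbf{x}_R+j\mathbf{x}_I)^t].
\end{align*}
Separating real and imaginary parts yields the four identities
\begin{align*}
\Re(\mathbf{M}_1) &= \mathbb{E}[\mathbf{x}_R\mathbf{x}_R^t]+\mathbb{E}[\mathbf{x}_I\mathbf{x}_I^t], \\
\Im(\mathbf{M}_1) &= \mathbb{E}[\mathbf{x}_I\mathbf{x}_R^t]-\mathbb{E}[\mathbf{x}_R\mathbf{x}_I^t], \\
\Re(\mathbf{M}_2) &= \mathbb{E}[\mathbf{x}_R\mathbf{x}_R^t]-\mathbb{E}[\mathbf{x}_I\mathbf{x}_I^t], \\
\Im(\mathbf{M}_2) &= \mathbb{E}[\mathbf{x}_R\mathbf{x}_I^t]+\mathbb{E}[\mathbf{x}_I\mathbf{x}_R^t].
\end{align*}

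The second step is to solve this $4\times 4$ linear system for the four real cross-moment matrices, giving
\begin{align*}
\mathbb{E}[\mathbf{x}_R\mathbf{x}_R^t] &= \tfrac{1}{2}\bigl(\Re(\mathbf{M}_1)+\Re(\mathbf{M}_2)\bigr), \\
\mathbb{E}[\mathbf{x}_I\mathbf{x}_I^t] &= \tfrac{1}{2}\bigl(\Re(\mathbf{M}_1)-\Re(\mathbf{M}_2)\bigr), \\
\mathbb{E}[\mathbf{x}_R\mathbf{x}_I^t] &= \tfrac{1}{2}\bigl(-\Im(\mathbf{M}_1)+\Im(\mathbf{M}_2)\bigr), \\
\mathbb{E}[\mathbf{x}_I\mathbf{x}_R^t] &= \tfrac{1}{2}\bigl(\Im(\mathbf{M}_1)+\Im(\mathbf{M}_2)\bigr).
\end{align*}
Assembling these into the block matrix $\mathbb{E}[\mathbf{v}\mathbf{v}^t]$ with blocks $(\mathbb{E}[\mathbf{x}_R\mathbf{x}_R^t],\mathbb{E}[\mathbf{x}_R\mathbf{x}_I^t];\mathbb{E}[\mathbf{x}_I\mathbf{x}_R^t],\mathbb{E}[\mathbf{x}_I\mathbf{x}_I^t])$ recovers exactly the matrix $\Xi$ given in the statement.

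Finally, since $\mathbf{v}=T\mathbf{x}$ where $T$ is the real-linear ``unstacking'' operator $\mathbf{z}\mapsto [\Re(\mathbf{z})^t,\Im(\mathbf{z})^t]^t$ (which can be written as multiplication by a fixed real matrix applied to $[\Re(\mathbf{x})^t,\Im(\mathbf{x})^t]^t$, but more importantly is a real-linear functional of $\mathbf{x}$), joint complex Gaussianity of $\mathbf{x}$ forces $\mathbf{v}$ to be a jointly real Gaussian vector in $\mathbb{R}^{2k}$. Its mean is zero by hypothesis, and its covariance was just computed to be $\Xi$. Hence $\mathbf{v}\sim\mathcal{N}(\mathbf{0},\Xi)$, which is the claim. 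The only subtlety is keeping track of transpose versus conjugate-transpose when expanding the outer products; beyond that the argument is a direct bookkeeping exercise and presents no substantive obstacle.
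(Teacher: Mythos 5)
Your proposal is correct: the four real/imaginary moment identities and their inversion are right, and the resulting block matrix is exactly the $\Xi$ defined in the paper, with the Gaussianity of $\mathbf{v}$ following because joint complex Gaussianity of $\mathbf{x}$ is, by definition, joint real Gaussianity of the stacked vector $[\Re(\mathbf{x})^t\ \Im(\mathbf{x})^t]^t$. The paper gives no proof of its own --- it simply cites \cite{gallager} --- and your computation is precisely the standard argument behind that citation, so there is nothing further to compare.
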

Using Lemma \ref{convergence}, we conculde that $\mathbf{c^*}\mathbf{x}$ is normally distributed with zero mean and variance $\sigma_n^2 =\widetilde{\mathbf{c}}^t \Xi \widetilde{\mathbf{c}} $. This conculdes the proof of the theorem.
\bibliographystyle{IEEEtran}
\bibliography{References}
\end{document}